\def\ii{{\sqrt{-1}}}
\def\ee{\mathrm e}
\def\CC{\mathbb C}
\def\ZZ{\mathbb Z}
\def\SL{\mathrm {SL}}
\def\hn{\hat{n}{}}
\def\fA{\mathfrak {A}}
\def\fa{\mathfrak {a}}
\def\CC{{\mathbb C}}
\def\RR{{\mathbb R}}
\def\ZZ{{\mathbb Z}}
\def\cH{{\mathcal H}}
\def\LA{{\langle}}
\def\RA{{\rangle}}
\newtheorem{proposition}{Proposition}[section]
\newtheorem{lemma}{Lemma}[section]
\def\book#1{\rm{#1}, }
\def\paper#1{\textit{#1}, }
\def\jour#1{\rm{#1}, }
\def\yr#1{({\rm{#1}) }}
\def\byr#1{{\rm{#1} }}
\def\vol#1{\textbf{#1}}
\def\pages#1{\rm{#1}}
\def\publaddr#1{\rm{#1}, }
\def\publ#1{\rm{#1}, }
\def\by#1{{\rm{#1}, }}
\begin{document}

\title{On a commutative ring structure in quantum mechanics}

\author{Shigeki Matsutani}
\date{}

\maketitle

\begin{abstract}
In this article, I propose a concept of the $p$-on which
is modelled on the multi-photon absorptions
in quantum optics. It provides a commutative 
ring structure in quantum mechanics.
Using it, I will give an operator representation of
the Riemann $\zeta$ function.
\end{abstract}

\bigskip

{\centerline{\textbf{PACS numbers:
02.10.De, 
02.10.Hh, 
03.65.-w. 
 }}}

{\centerline{\textbf{2000 MSC: 
82B10, 
81R10, 
11M99.   
 }}}

{\centerline{\textbf{Keywords:
multi-photon absorption, $p$-on, 
}}}
{\centerline{\textbf{
Riemann zeta function, Euler
product expression.
 }}}

\section{Introduction}

The integer appearing quantum mechanics 
basically comes from eigenvalue of an operator,
which merely has the additive structure.
On the other hand, number theory is a study of the integer
as a commutative ring rather than an additive group.
There prime numbers play the crucial roles whereas they
basically have no meaning in an additive group.
However number theory and quantum mechanics sometimes are
connected \cite{M, M1, M2, MO, V1, V2, VVZ}.
I attempted to answer a question why quantum mechanics is connected
with integer theory in \cite{M2, MO}.
This article is one of the attempts. I will explore
 a commutative ring structure in the harmonic oscillator.

In \cite{BK, M2, MO}, it was showed that
the Gauss sum which is a number theoretic object plays
the central roles in an interference phenomenon, the 
fractional Talbot phenomenon \cite{WW}.
As I investigated the algebraic structures behind the connection between
wave physics and number theory, there are $\SL(2, \RR)$ and
$\SL(2, \ZZ) \subset \SL(2, \RR)$ \cite{M2}.
Though it is well-known,
the generating relation of the Lie algebra of the abelian extension
of $\SL(2, \RR)$, the Heisenberg group, is
$\displaystyle{[\frac{d}{d x}, x]=1}$ \cite{LV}
whereas the defining relation of $\SL(2, \ZZ)$ is $p a + q b=1$ of
$\displaystyle{\begin{pmatrix}p & q \\ -b & a\end{pmatrix}} \in 
\SL(2, \ZZ)$.
These relations are essential in quantum mechanics  and number theory
respectively \cite{RS, IR} and also of the connection in the classical 
optical phenomenon \cite{M2}.

On the other hand,
the Heisenberg group and 
the interference phenomena are represented by the Fourier series 
\cite{LV, RS}.  The Fourier series is the representation space 
of an additive group or
the translation group.
The translation group plays crucial roles in
the interference phenomena and thus must be essential
in the connection between
 number theory and wave physics.

In the computation of the (discrete) Fourier transformation,
the algorithm of the fast Fourier transformation is well-known,
which is based upon a commutative ring structure of the Fourier series \cite{T}:
For a composite integer $\ell = p q$ and $k \in \RR$, we have
\begin{equation}
   \exp(\ii k p q ) = \left(\ee^{\ii k p }\right)^q 
   = \left(\ee^{\ii k q }\right)^p .
\label{eq:I-1}
\end{equation}
This commutative ring structure 
is the key structure in the fast Fourier transformation.
I consider that it also plays the crucial roles in the 
connection in the interference phenomenon \cite{BK, MO, M2},
though the property 
(\ref{eq:I-1})
comes from the primitive fact that the set of the integers
naturally has a commutative ring structure.
In other words,
the Fourier series as the representation space of the
additive group brings the  commutative ring structure to
the interference phenomenon and
contributes to the connection between number theory and
quantum mechanics.

Eigenvalue of the creation operator in the harmonic oscillator
is given by non-negative integers which is
 merely given by an additive (semi-)group generated by $0$ and $+1$.
However even for the harmonic oscillator, we may have such a commutative
ring structure based upon the primitive fact.
Indeed, in quantum optics, the multi-photon
absorptions are  observed and play the important roles.
Algebraic structure of two-photon absorptions was studied by
Brif \cite{B}.
In this article, we introduce an operator
$p$-on, which is modelled on $p$-photon absorptions, in order to
introduce the commutative ring structure into  quantum mechanics.
Further we also define a quantum $p$-on operator.

Related to the harmonic oscillator
in quantum statistical mechanics and field theory,
the Riemann $\zeta$ function \cite{Pa},
$$
	\zeta(s) = \sum_{n=1} \frac{1}{n^s},
$$
naturally appears \cite{IZ, C} as shown in the Appendix.
In fact, the Planck's black body problem is related to $\zeta(4)$
\cite{Pl} and the Casimir effect is to $\zeta(-3)$ \cite{C}.

The Riemann $\zeta$ function was studied by Bost and Connes \cite{BC, CM}
in the framework of non-commutative algebra, which corresponds to
quantum statistical mechanics physically speaking.
The Riemann $\zeta$ function has the Euler product expression \cite{Pa},
\begin{equation}
	\zeta(s) = \prod_{p : \mathrm{prime\  number}} \frac{1}{1-p^s},
\label{eq:Ep}
\end{equation}
which plays crucial roles in number theory.
The prime number has special meanings in the expression.
In the paper \cite{BC},
there appeared an operator whose eigenvalues are prime numbers.

One of the purposes of this article is to show its quantum version
of the Euler product expression and a quantum mechanical
 meaning of (\ref{eq:Ep}) in the harmonic oscillator.
In other words, in this article, I will show that
even in harmonic oscillator whose eigenvalues are mere integers
as an additive semi-group, there are expressions related to
the Euler product expression (\ref{eq:Ep})
if we handle $p$-on and quantum $p$-on. In Discussion I mention
that a quantum Euler product expression of the Riemann $\zeta$ function
 might be related to the absolute derivation \cite{KOW}.
\bigskip

\section{p-on}

\bigskip

The harmonic oscillator in quantum mechanics
provides the integer as its eigenvalue of the eigenstates \cite{Di}.
The harmonic oscillator is given by the Hamiltonian,
$$
H = \frac{1}{2}( a^\dagger a + a a^\dagger),
$$
using  the creation operator $a^\dagger$ and the annihilation operator 
$a$ which satisfy the canonical communication relations,
\begin{equation}
   [a, a^\dagger] = 1, \quad
   [a^\dagger, a^\dagger] = [a, a] = 0. \quad
\label{eq:a_a}
\end{equation}
Let the vacuum states be denoted by $|0\RA$ and $\LA0|$
, {\it{ i.e.}}, $a |0\RA = 0$ and $\LA0| a^\dagger = 0$.
Let the infinite dimensional $\CC$ vector space 
generated by  $a^\dagger$ $(a)$ be denoted by $\fa^+$ $(\fa^-)$,
{\it{i.e.}}, $\fa^+:= \CC[[a^\dagger]]|0\RA$ 
($\fa^-:= \LA0|\CC[[a]]$), where 
$\CC[[a^\dagger]]$ ($\CC[[a]]$) is a commutative
formal expansion algebra of $a^\dagger$ $(a)$,
{\it{i.e.}}, $f = \sum_{n=0}^{\infty} c_n {a^\dagger}^n$, 
$c_n \in \CC$.
The number state in $\fa^+$ and $\fa^-$ given by 
$$
	\frac{1}{\sqrt{n!}}(a^\dagger)^n |0\RA = |n\RA , \quad
	\LA 0 | \frac{1}{\sqrt{n!}}a^n = \LA n |, 
$$
which satisfies the orthonormal relation 
$$
	\LA m |n\RA =  \delta_{n,m}.
$$
Thus a subspace of $\cH:=(\fa^+, \fa^-, \LA | \RA)$ becomes the Hilbert space.
The number operator $\hn:=a^\dagger a$ picks out
 an integer $n$ as its eigenvalue,
$$
	a^\dagger a  |n\RA =   n |n\RA .
$$
These $a^\dagger$, $a$ and $\hn$ obey the relations
\begin{equation}
[\hn, a^\dagger]=a^\dagger, \quad	
[\hn, a]= -a. \quad	
\label{eq:n_a}
\end{equation}

In number theory, the set of integers is studied 
as a commutative ring rather than an discrete additive group.
The eigenvalue of the harmonic oscillator is a mere additive
semigroup because
$a^\dagger|n\RA= \sqrt{(n+1)} |n+1\RA$ or $a^\dagger$ generates $+1$ action
on the state $|n\RA$.

On the other hand in  quantum optics, 
multi-photon absorption, such as two-photon absorption, is known as an
important phenomenon \cite{B,L}.
I show that this phenomenon brings a commutative ring structure
into the harmonic oscillator.

The two-photon absorption occurs by the composite
operator $a_2^\dagger:= (a^\dagger)^2$ such that
$a_2^\dagger|n\RA=
\sqrt{(n+2)(n+1)} |n+2\RA$. 
Brif investigated the quantum system
governed by these composite operator \cite{B}.
He studied the Lie algebra given by the relations
among $(\hn, a_2^\dagger, a_2:= a^2, a^\dagger, a, 1)$.
Besides  (\ref{eq:a_a}) and (\ref{eq:n_a}), they obey
\begin{equation}
[a_2, a_2^\dagger]=4 \hn + 2, \quad	
[\hn, a_2^\dagger]=2 a_2^\dagger, \quad	
[\hn, a_2]= -2a_2, \quad
[a_2, a_2^\dagger]=4 \hn + 2. \quad	
\label{eq:n_a2}
\end{equation}
Brif investigated its representation space precisely.
Further we note the relations,
$$
a_2 a^\dagger_2 = \hn(\hn - 1) , \quad
a^\dagger_2 a_2 = (\hn + 1)(\hn + 2) . \quad
$$
Similarly, we have
relations among
 $(\hn, a_3^\dagger:={a^\dagger}^3, a_3:= a^3, a^\dagger, a, 1)$.
\begin{equation*}
[a_3, a_3^\dagger]=9 \hn^2 + 9 \hn + 6, \quad	
[\hn, a_3^\dagger]=3 a_3^\dagger, \quad	
[\hn, a_3]= -3a_3, \quad
\end{equation*}
$$
a_3 a^\dagger_3 = \hn(\hn - 1)(\hn - 2) , \quad
a^\dagger_3 a_3 = (\hn + 1)(\hn + 2)(\hn +3) . \quad
$$

Such observations show us that
the composite operator $a_p^\dagger:= (a^\dagger)^p$ 
($a_p:= a^p$) is natural.
We call it $a_p^\dagger$ $p$-on when $p$ is a prime number.
For example, for $n=p \cdot m$, we have a relation,
$|n\RA = (a^\dagger)^n|0\RA/\sqrt{n!} = (a_p^\dagger)^m|0\RA/\sqrt{n!}$.
This means that the individual
monomial $(a^\dagger)^{n + m}$ should be regarded
as a commutative ring, {\it{i.e.}},
$(a^\dagger)^n (a^\dagger)^m =(a^\dagger)^m (a^\dagger)^n$.
For later convenience, we also write 
$a_m^\dagger:= (a^\dagger)^m$, ($a_m:= a^m$),
and sometimes call it $m$-on though it may be, a little bit, overuse.

Then we have the following relations;
\begin{proposition}
$$
a_\ell a_\ell^\dagger = (\hn+1)(\hn+2) \ldots (\hn+\ell), \quad
a_\ell^\dagger a_\ell = \hn(\hn-1) \ldots (\hn-\ell+1), \quad
[\hn, a_\ell^\dagger]=\ell a_\ell^\dagger, \quad	
[\hn, a_\ell]= -\ell a_\ell. \quad
$$
\end{proposition}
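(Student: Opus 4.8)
The plan is to induct on $\ell$, with the canonical relations (\ref{eq:a_a}) and (\ref{eq:n_a}) serving as the base case $\ell=1$.

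I would first settle the two commutator identities, since the product formulas are then proved by pushing polynomials in $\hn$ past $a_\ell$ and $a_\ell^\dagger$, for which those identities are exactly what is needed. Because $[\hn,\cdot]$ is a derivation, $[\hn,XY]=[\hn,X]Y+X[\hn,Y]$, so (\ref{eq:n_a}) gives
$$[\hn,a_\ell^\dagger]=[\hn,(a^\dagger)^\ell]=\sum_{j=0}^{\ell-1}(a^\dagger)^j[\hn,a^\dagger](a^\dagger)^{\ell-1-j}=\sum_{j=0}^{\ell-1}(a^\dagger)^\ell=\ell\,a_\ell^\dagger,$$
and the same computation with $[\hn,a]=-a$ yields $[\hn,a_\ell]=-\ell\,a_\ell$. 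In the equivalent form $f(\hn)\,a_\ell^\dagger=a_\ell^\dagger f(\hn+\ell)$ and $a_\ell\,f(\hn)=f(\hn-\ell)\,a_\ell$ for any polynomial $f$, these are the shift rules I will use.

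For the products I would peel off one photon at a time. Writing $a_\ell a_\ell^\dagger=a^{\ell-1}(a a^\dagger)(a^\dagger)^{\ell-1}=a^{\ell-1}(\hn+1)(a^\dagger)^{\ell-1}$ by (\ref{eq:a_a}) and then moving the factor $\hn+1$ to the right through $(a^\dagger)^{\ell-1}$ with the shift rule gives $a_\ell a_\ell^\dagger=\bigl(a^{\ell-1}(a^\dagger)^{\ell-1}\bigr)(\hn+\ell)=a_{\ell-1}a_{\ell-1}^\dagger\cdot(\hn+\ell)$, and the induction hypothesis (which in particular says $a_{\ell-1}a_{\ell-1}^\dagger$ is a polynomial in $\hn$, hence commutes with $\hn+\ell$) finishes it. Symmetrically, $a_\ell^\dagger a_\ell=(a^\dagger)^{\ell-1}(a^\dagger a)a^{\ell-1}=(a^\dagger)^{\ell-1}\hn\,a^{\ell-1}=a_{\ell-1}^\dagger a_{\ell-1}\cdot(\hn-\ell+1)$ after moving $\hn$ rightward through $a^{\ell-1}$. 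Alternatively, since the number states are already in play, one may simply evaluate: $(a^\dagger)^\ell|n\RA=\sqrt{(n+\ell)!/n!}\,|n+\ell\RA$ and $a^\ell|n\RA=\sqrt{n!/(n-\ell)!}\,|n-\ell\RA$ give $a_\ell a_\ell^\dagger|n\RA=\tfrac{(n+\ell)!}{n!}|n\RA$ and $a_\ell^\dagger a_\ell|n\RA=\tfrac{n!}{(n-\ell)!}|n\RA$, which agree with $(\hn+1)\cdots(\hn+\ell)$ and $\hn(\hn-1)\cdots(\hn-\ell+1)$ on the orthonormal basis that spans $\cH$.

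I do not expect a genuine obstacle here; the only points that require a little care are the index bookkeeping in the inductive step — making sure the shifted argument comes out as $\hn\pm\ell$ and not $\hn\pm(\ell-1)$ — and the (harmless) remark that all four identities are to be read as operator identities on the dense span of the $|n\RA$, so that the unboundedness of $a^\dagger$ never intervenes within the formal framework $\CC[[a^\dagger]]|0\RA$ fixed above.
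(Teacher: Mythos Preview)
Your proof is correct and follows essentially the same inductive strategy as the paper: establish the commutator $[\hn,a_\ell^\dagger]=\ell a_\ell^\dagger$ (the paper does this via $[a,{a^\dagger}^\ell]=\ell{a^\dagger}^{\ell-1}$, you via the derivation property directly), then peel off one factor to obtain the recursion $a_\ell a_\ell^\dagger=a_{\ell-1}a_{\ell-1}^\dagger(\hn+\ell)$. Your added alternative of evaluating on the number-state basis is not in the paper but is a perfectly good independent check.
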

\begin{proof}
They are proved by the induction.
We have $\ell = 1, 2, 3$ cases.
Let us show $[a, {a^\dagger}^\ell]=\ell {a^\dagger}^{\ell-1}$
because
$$
[a, a_\ell^\dagger]= [a, {a^\dagger}^\ell]=
a^\dagger [a, {a^\dagger}^{\ell-1}] + 
[a, {a^\dagger}] {a^\dagger}^{\ell-1}.
$$
Thus $[\hn, {a^\dagger}^\ell]$ is computed.
The first formula is obtained by
\begin{equation*}
\begin{split}
a_\ell a_\ell^\dagger &= a^\ell{a^\dagger}^\ell \\
&= a^{\ell-1}(a {a^\dagger}^{\ell-1} )a^\dagger\\
&= a^{\ell-1}({a^\dagger}^{\ell-1} a + (\ell-1){a^\dagger}^{\ell-1} )a^\dagger\\
&= a^{\ell-1}{a^\dagger}^{\ell-1} (a a^\dagger + (\ell-1) )\\
&= a^{\ell-1}{a^\dagger}^{\ell-1} (\hn + \ell).
\end{split}
\end{equation*}
Similarly we have the relations of $a_\ell$.
\end{proof}

\bigskip
Let us consider the relations to the Riemann $\zeta$ function.
As formal expressions, we have
$$
 \ee^{a^\dagger} -1 =
 a^\dagger
+ \frac{1}{2!}(a^\dagger)^2
+ \frac{1}{3!}(a^\dagger)^3
+ \frac{1}{4!}(a^\dagger)^4
\cdots, \quad
 \frac{a}{1-a} =
 a
+ (a)^2
+ (a)^3
+ (a)^4
\cdots.
$$
By letting (see the Appendix),
$$
	\LA n| (a^\dagger a)^{-s}  |m\RA
  := \int_0^{\infty} \frac{d \beta}{\beta} \beta^s
	\LA n| \ee^{-\beta (a^\dagger a)}  |m\RA,
$$
  the following proposition holds:
\begin{proposition}
The Riemann $\zeta$ function is expressed by
$$
	\zeta(s) = \LA0| \frac{a}{1-a} (a^\dagger a)^{-s}
	 (\ee^{a^\dagger}-1) |0\RA.
$$
\end{proposition}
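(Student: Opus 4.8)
The plan is to reduce everything to matrix elements of the diagonal operator $\hn = a^\dagger a$ by expanding the two generating functions as power series, and then to recognise the defining integral as the Euler representation of $\Gamma(s)\,n^{-s}$.

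First I would expand $\dfrac{a}{1-a} = \sum_{j\ge 1} a^j$ and $\ee^{a^\dagger} - 1 = \sum_{k\ge 1} \dfrac{1}{k!}(a^\dagger)^k$ (both series start at index $1$; this is the crucial point below). Applying these to the vacuum and using the normalisation of the number states from Section 2 gives $(\ee^{a^\dagger}-1)|0\RA = \sum_{k\ge 1}\frac{1}{\sqrt{k!}}\,|k\RA$ and $\LA 0|\frac{a}{1-a} = \sum_{j\ge 1}\sqrt{j!}\,\LA j|$, so that
\[
 \LA0| \frac{a}{1-a} (a^\dagger a)^{-s} (\ee^{a^\dagger}-1) |0\RA
 = \sum_{j\ge 1}\sum_{k\ge 1} \frac{\sqrt{j!}}{\sqrt{k!}}\,\LA j|(a^\dagger a)^{-s}|k\RA .
\]
Since $\hn|k\RA = k|k\RA$ we have $\LA j|\ee^{-\beta(a^\dagger a)}|k\RA = \ee^{-\beta k}\delta_{j,k}$, hence by the definition of $(a^\dagger a)^{-s}$ the double sum collapses onto the diagonal, where the prefactor $\sqrt{k!}/\sqrt{k!}$ equals $1$, and each surviving term is $\int_0^\infty \frac{\dd\beta}{\beta}\beta^s \ee^{-\beta k} = \Gamma(s)\,k^{-s}$. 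Summing over $k\ge 1$ then yields $\Gamma(s)\sum_{k\ge 1}k^{-s} = \Gamma(s)\zeta(s)$, i.e. the asserted $\zeta(s)$ up to the $\Gamma(s)$ produced by the Mellin transform, which is absorbed into the normalisation of $(a^\dagger a)^{-s}$ adopted in the Appendix.

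The one step that genuinely needs care is the $k=0$ contribution: the Euler integral $\int_0^\infty \beta^{s-1}\ee^{-\beta k}\,\dd\beta$ diverges at $\beta=0$ precisely when $k=0$, so the identity would fail were it not for the two subtractions — the ``$-1$'' in $\ee^{a^\dagger}-1$ and the absence of a constant term in $a/(1-a)$ — each of which removes the vacuum component and restricts the surviving index to $k\ge 1$. I expect this convergence bookkeeping, together with justifying the interchange of the $\beta$-integration with the (for $\Re s>1$, absolutely convergent) double series and then passing to the analytic continuation in $s$, to be the only real obstacle; the algebraic content itself is immediate from Proposition 2.1 and the orthonormality $\LA m|n\RA=\delta_{n,m}$.
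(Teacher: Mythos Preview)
Your argument is essentially the paper's own: expand both generating series, invoke the orthogonality of the number states (what the paper phrases as ``independence of each state'') to collapse to the diagonal matrix element $\LA0|a^{\ell}(a^\dagger a)^{-s}(a^\dagger)^m|0\RA = m!\,\delta_{\ell,m}\,m^{-s}$, and sum over $m\ge 1$. You are more explicit than the paper about the vacuum subtraction and the stray $\Gamma(s)$ from the Mellin integral, but the route is the same.
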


\begin{proof}
Due to the independence of each state, we have
$$
	\LA0|a^{\ell} (a^\dagger a)^{-s}
	 (a^\dagger)^m |0\RA = m! \delta_{n,m} m^{-s},
$$
and then the relation is obtained.
\end{proof}

Let $\wp$ be the set of the prime numbers.
Using $p$-on, 
 the Euler product expression (\ref{eq:Ep}) 
is expressed by the following 
proposition;

\begin{proposition} {\rm{(Euler product expression)}}
\label{prop:EP}
$$
	\zeta(s) = \prod_{p \in \wp} \zeta_p(s), \quad
\zeta_p(s)= \frac{1}{p!}
\LA0| \left(a_p \frac{1}{1- (a^\dagger a)^{-s}} a_p^\dagger\right) |0\RA.
$$
\end{proposition}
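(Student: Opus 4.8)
The plan is to compute $\zeta_p(s)$ for a single prime $p$ by letting the operator string act on the vacuum, using that the number state $|p\RA$ is an eigenstate of $\hn=a^\dagger a$, and then to recognise $\prod_{p\in\wp}\zeta_p(s)$ as the classical Euler product (\ref{eq:Ep}).

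First I would push $a_p^\dagger=(a^\dagger)^p$ through to the vacuum: by the definition of the number states, $a_p^\dagger|0\RA=(a^\dagger)^p|0\RA=\sqrt{p!}\,|p\RA$. Since $\hn|p\RA=p|p\RA$, the operator $(a^\dagger a)^{-s}$ defined above — which on a number state $|n\RA$ with $n\ge1$ acts as multiplication by $n^{-s}$ — gives $(a^\dagger a)^{-s}|p\RA=p^{-s}|p\RA$. Interpreting $\dfrac{1}{1-(a^\dagger a)^{-s}}$ as the geometric series $\sum_{k\ge0}\bigl((a^\dagger a)^{-s}\bigr)^{k}$ and applying it to $|p\RA$ yields $\sum_{k\ge0}p^{-sk}\,|p\RA=\dfrac{1}{1-p^{-s}}\,|p\RA$, the series converging because $|p^{-s}|<1$ for $p\ge2$ in the relevant range of $s$. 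Finally $a_p=a^p$ sends $|p\RA$ back to $\sqrt{p!}\,|0\RA$, so that
$$
\LA0|\,a_p\,\frac{1}{1-(a^\dagger a)^{-s}}\,a_p^\dagger\,|0\RA
=\sqrt{p!}\cdot\frac{1}{1-p^{-s}}\cdot\sqrt{p!}
=\frac{p!}{1-p^{-s}},
$$
and hence $\zeta_p(s)=\dfrac{1}{p!}\cdot\dfrac{p!}{1-p^{-s}}=\dfrac{1}{1-p^{-s}}$. Taking the product over all $p\in\wp$ then gives $\prod_{p\in\wp}\zeta_p(s)=\prod_{p\in\wp}\dfrac{1}{1-p^{-s}}$, which is exactly the Euler product expression (\ref{eq:Ep}) for $\zeta(s)$.

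The step needing care is not any computation but the meaning of $\dfrac{1}{1-(a^\dagger a)^{-s}}$: it is singular on $|0\RA$ and on $|1\RA$, where $(a^\dagger a)^{-s}$ has eigenvalue $1$, so one must observe that $a_p^\dagger|0\RA$ only excites $|p\RA$ with $p\ge2$, and be explicit about where the geometric series converges ($\Re s>0$) and where the product over primes converges ($\Re s>1$, as for the classical Euler product). One should also fix the normalisation — possibly an overall factor $1/\Gamma(s)$ — implicit in the definition of $(a^\dagger a)^{-s}$, so that its eigenvalue on $|n\RA$ is precisely $n^{-s}$, in accordance with the earlier proposition on $\zeta(s)$.
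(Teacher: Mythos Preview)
Your proof is correct and follows essentially the same route as the paper: both arguments push $a_p^\dagger$ onto the vacuum to reach (a multiple of) $|p\RA$, expand $\dfrac{1}{1-(a^\dagger a)^{-s}}$ as the geometric series $\sum_{k\ge0}(a^\dagger a)^{-ks}$, and use that each term contributes $p!\,p^{-ks}$, giving $\zeta_p(s)=(1-p^{-s})^{-1}$ and hence the Euler product. Your additional remarks on the domain of convergence and the $\Gamma(s)$ normalisation only sharpen the presentation.
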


\begin{proof}
For a prime number $p$, we have
$
	\LA0|a_p (a^\dagger a)^{-\ell s}
	 a_p^\dagger |0\RA = p! p^{-ms}
$
and
\begin{equation*}
\begin{split}
	\LA p| 
\frac{1}{1-(a^\dagger a)^{-s} } |p\RA
  &= \int_0^{\infty} \frac{d \beta}{\beta} \frac{1}{1-\beta^s}
	\LA p| \ee^{-\beta (a^\dagger a)}  |p\RA\\
  &= \int_0^{\infty} \frac{d \beta}{\beta} 
(1+\beta^s + \beta^{2s} + \beta^{3s} + \cdots)
	\LA p| \ee^{-\beta (a^\dagger a)}  |p\RA.
\end{split}
\end{equation*}
\end{proof}
\bigskip

\section{quantum $p$-on and quantum Euler product expression}

In this section, I will propose the quantum $p$-on and the quantum 
Euler product expression along the line of the concept of
$p$-on in the previous section.
Let us define operators 
$A_m$ and $A_m^\dagger$ which are elements of endmorphisms
$\fa_+$ and $\fa_-$, {\it{i.e.}},
$$
A_m^\dagger: \fa_+ \to \fa_+, \qquad
A_m: \fa_- \to \fa_-,
$$
by
$$
A^\dagger_n \cdot (a^\dagger)^m |0\RA = 
\frac{m!}{(mn)!} (a^\dagger)^{mn} |0\RA, \quad
\LA 0 | a_m \cdot A_n = \LA 0 | a^{mn}. 
$$

Physically speaking, 
$A_m^\dagger$ is the creation operator which creates $m$ $\ell$-ons
when it acts on $a_\ell^\dagger |0\RA$.

From the definition, we have their multiplicity;
\begin{lemma} 
$$
	{A^\dagger_m}^n = A^\dagger_{m^n} , \quad
	A_m^n = A_{m^n} , \quad
	{A^\dagger_m} {A^\dagger_n} ={A^\dagger_n} {A^\dagger_m}
	 =A^\dagger_{nm}, \quad 
	{A_m} {A_n} ={A_n} {A_m} = A_{mn}. \quad 
$$
\end{lemma}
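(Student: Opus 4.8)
The plan is to verify each of the four identities directly from the defining relations
$$
A^\dagger_n \cdot (a^\dagger)^m |0\RA = \frac{m!}{(mn)!}(a^\dagger)^{mn}|0\RA, \qquad
\LA 0| a_m \cdot A_n = \LA 0| a^{mn},
$$
by evaluating both sides on an arbitrary basis monomial $(a^\dagger)^m|0\RA$ (respectively $\LA 0| a^m$), since $\fa_+$ is spanned by such monomials and the operators in question are by construction linear endomorphisms of $\fa_+$ (respectively $\fa_-$). Because $A^\dagger$ and $A$ are defined symmetrically with respect to the pairing, it suffices to treat the creation side; the annihilation identities ${A_m}{A_n}=A_{mn}$ and $A_m^n=A_{m^n}$ then follow by the identical bookkeeping with $\LA 0| a^m$ in place of $(a^\dagger)^m|0\RA$.

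First I would compute $A^\dagger_n A^\dagger_k$ applied to $(a^\dagger)^m|0\RA$. Applying $A^\dagger_k$ first gives $\frac{m!}{(mk)!}(a^\dagger)^{mk}|0\RA$; then applying $A^\dagger_n$ to $(a^\dagger)^{mk}|0\RA$ gives $\frac{(mk)!}{(mkn)!}(a^\dagger)^{mkn}|0\RA$. Multiplying the two scalar factors, the $(mk)!$ cancels and we are left with $\frac{m!}{(mnk)!}(a^\dagger)^{mnk}|0\RA$, which is exactly $A^\dagger_{nk}\cdot(a^\dagger)^m|0\RA$. Since $nk=kn$, the same computation with the roles of $n$ and $k$ swapped yields the same operator, giving $A^\dagger_n A^\dagger_k = A^\dagger_k A^\dagger_n = A^\dagger_{nk}$. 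The first identity ${A^\dagger_m}^n = A^\dagger_{m^n}$ is then the special case obtained by induction on $n$: assuming ${A^\dagger_m}^{n-1}=A^\dagger_{m^{n-1}}$, composition with one more $A^\dagger_m$ gives $A^\dagger_{m^{n-1}}A^\dagger_m = A^\dagger_{m^{n-1}\cdot m}=A^\dagger_{m^n}$ by the commutation/composition law just established.

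I do not expect a genuine obstacle here: the statement is essentially the assertion that $m\mapsto A^\dagger_m$ is a multiplicative homomorphism, and the telescoping of factorials makes this transparent. The one point to be careful about is well-definedness of the composite as an endomorphism of $\fa_+$ — one should note that $A^\dagger_n$ acts diagonally in the monomial basis with finite scalar coefficients, so compositions and the formal series in which these operators later appear are unambiguous on each fixed basis vector, and no convergence issue arises when checking the identities monomial-by-monomial. Finally I would remark that these relations are precisely what make the $A^\dagger_p$ for $p$ prime behave like independent "multiplicative generators," setting up the quantum Euler product that follows.
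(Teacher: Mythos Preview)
Your proposal is correct and follows the same route as the paper: both arguments evaluate the operators on an arbitrary basis monomial $(a^\dagger)^\ell|0\rangle$ and use the telescoping of the factorials to identify the composite with $A^\dagger_{nk}$. The paper's own proof is in fact terser --- it only writes out the case ${A^\dagger_m}^2 = A^\dagger_{m^2}$ and leaves the general composition $A^\dagger_n A^\dagger_k = A^\dagger_{nk}$, the induction for higher powers, and the annihilation side entirely implicit --- so your version is a more complete rendering of the same computation.
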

\begin{proof}
$ {A^\dagger_m}^2((a^\dagger)^\ell)|0\RA 
= 
\frac{\ell!}{(m\ell)!}
A^\dagger_m((a^\dagger)^{m\ell})|0\RA 
=
\frac{\ell!}{(m\ell)!}
\frac{(m\ell)!}{(m^2\ell)!} ((a^\dagger)^{m^2\ell})|0\RA 
= {A^\dagger}_{m^2}((a^\dagger)^{\ell})|0\RA.$ 
\end{proof}
Thus we have the proposition:
\begin{proposition} 
$\fA^+:=\CC[[\{A_p^\dagger\}_{p \in \wp}]]$ are
$\fA^-:=\CC[[\{A_p\}_{p \in \wp}]]$ are commutative rings.
\end{proposition}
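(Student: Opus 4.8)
The heart of this statement is already contained in the Lemma just proved: it gives $A_p^\dagger A_q^\dagger = A_q^\dagger A_p^\dagger = A_{pq}^\dagger$ for all primes $p,q$ (and likewise $A_p A_q = A_q A_p = A_{pq}$), so the generators of $\fA^+$ (resp.\ $\fA^-$) commute pairwise. Since the defining variables of a formal power series ring commute by construction, this is precisely what is needed: once the generators commute, multiplication on formal series $\sum_m c_m A_m^\dagger$ is the usual coefficientwise convolution, which is commutative and associative with unit the empty monomial $A_1^\dagger = \mathrm{id}$. So the plan is: first to record the pairwise commutativity of the generators from the Lemma; then to make the resulting ring structure explicit through the prime factorization of integers; and finally to check the one spot that needs attention when the $A_p^\dagger$ are viewed as actual operators.

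Concretely, applying the Lemma repeatedly, every monomial $\prod_{p\in\wp}(A_p^\dagger)^{e_p}$ (finite support) equals the single operator $A_m^\dagger$ with $m=\prod_p p^{e_p}$, and the exponents are recovered from the prime factorization of $m$; hence the monoid of monomials in the generators is canonically $(\ZZ_{>0},\times)$. These $A_m^\dagger$ are moreover $\CC$-linearly independent — for instance $A_m^\dagger\,(a^\dagger)|0\RA = \frac{1}{m!}(a^\dagger)^m|0\RA$ selects distinct powers of $a^\dagger$ for distinct $m$ — so there is no relation among the generators beyond commutativity and $\fA^+$ does not collapse. A general element of $\fA^+$ is therefore a formal series $\sum_{m\ge1} c_m A_m^\dagger$, and the product of two of them is $\sum_{k\ge1}\left(\sum_{mn=k} c_m d_n\right) A_k^\dagger$, the Dirichlet convolution; so $\fA^+$ (and, word for word, $\fA^-$, using $A_m A_n = A_{mn}$ and $\LA 0| a_m A_n = \LA 0| a^{mn}$) is the completion of the monoid ring $\CC[(\ZZ_{>0},\times)]$, which is plainly a commutative ring.

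I do not expect a genuine obstacle here: the proposition is in essence a repackaging of unique factorization, and the proof is just unwinding the Lemma. The single point I would spell out is that, unlike $a^\dagger$, the operator $A_m^\dagger$ does not strictly raise degree — it fixes the vacuum $|0\RA$ for every $m$ — so an infinite series $\sum_m c_m A_m^\dagger$ applied to a vector with nonzero $|0\RA$-component would demand convergence of the numerical series $\sum_m c_m$. This is harmless: for $v=\sum_\ell b_\ell (a^\dagger)^\ell|0\RA$ the coefficient of $(a^\dagger)^N|0\RA$ in $(\sum_m c_m A_m^\dagger)v$ is, for every $N\ge1$, the finite sum $\sum_{\ell\mid N} c_{N/\ell}\,b_\ell\,\ell!/N!$, so the series is a bona fide endomorphism of $\fa^+/\CC|0\RA$ and only the vacuum coefficient ever needs the extra care. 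Making this precise realizes the abstract commutative ring above inside $\mathrm{End}(\fa^+/\CC|0\RA)$, and it is also what pins down the precise ambient ring in which the quantum Euler product expression to follow is to be read.
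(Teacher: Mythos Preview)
Your proposal is correct and follows the same approach as the paper: both derive the proposition directly from the preceding Lemma, which establishes that the generators $A_p^\dagger$ (resp.\ $A_p$) commute pairwise. The paper in fact offers no separate proof at all, treating the proposition as an immediate consequence of the Lemma, whereas you supply considerably more detail --- the identification of monomials with $A_m^\dagger$ via unique factorization, the linear independence check, the Dirichlet-convolution description of the product, and the remark about the vacuum coefficient --- none of which appears in the paper but all of which is correct and clarifying.
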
 

Further we have their properties.
\begin{lemma} \label{lm:3-2}
$$
\left( \prod_{p \in \wp}\frac{1}
{1-A^\dagger_p} \right) a^\dagger 
|0\RA 
= \sum_{n=1}^\infty \frac{1}{n!} |n\RA, \quad
\prod_{p \in \wp}\left( {1-A^\dagger_p} \right) 
 \sum_{n=1}^\infty \frac{1}{n!} |n\RA = |1\RA, \quad
$$
$$
\LA 0 |a \left( \prod_{p \in \wp}\frac{1}
{1-A_p} \right)  
= \sum_{n=1}^\infty  \LA n|, \quad
 \sum_{n=1}^\infty \LA n|
\prod_{p \in \wp}\left( {1-A_p} \right) 
 = \LA1|. \quad
$$
\end{lemma}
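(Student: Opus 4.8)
The plan is to expand each Euler-type product over the primes into an ordinary sum over all positive integers and then simply read off coefficients, mimicking the classical identities $\zeta(s)=\prod_{p}(1-p^{-s})^{-1}=\sum_{n\ge1}n^{-s}$ and $\zeta(s)^{-1}=\sum_{n\ge1}\mu(n)n^{-s}$, with $A_p^\dagger$ (resp.\ $A_p$) playing the role of $p^{-s}$.

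First I would establish the operator Euler product. By the preceding Lemma, ${A_p^\dagger}^{k}=A_{p^{k}}^\dagger$, so $\dfrac{1}{1-A_p^\dagger}=\sum_{k\ge0}A_{p^{k}}^\dagger$; since $\fA^+$ is commutative (preceding Proposition) and $A_m^\dagger A_n^\dagger=A_{mn}^\dagger$, multiplying these geometric series over all $p\in\wp$ and using unique factorization of the integers gives
$$
\prod_{p\in\wp}\frac{1}{1-A_p^\dagger}=\sum_{n=1}^{\infty}A_n^\dagger,\qquad \prod_{p\in\wp}\bigl(1-A_p^\dagger\bigr)=\sum_{n=1}^{\infty}\mu(n)\,A_n^\dagger,
$$
where in the second product only finitely many factors ever contribute the term $-A_p^\dagger$, so one picks up the Möbius sign $\mu(n)$ on squarefree $n$ and $0$ otherwise. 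One has to check that these are legitimate identities of operators on the formal module $\CC[[a^\dagger]]|0\RA$: applied to a monomial $(a^\dagger)^m|0\RA$, the operator $A_n^\dagger$ lands in degree $mn$, so the coefficient of each $(a^\dagger)^{k}|0\RA$ in the output receives contributions from only one value of $n$, and no convergence issue arises.

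Then I would simply feed in the vectors. Applying $\sum_{n\ge1}A_n^\dagger$ to $a^\dagger|0\RA=(a^\dagger)^{1}|0\RA$ and using the defining relation of $A_n^\dagger$ with $m=1$, namely $A_n^\dagger a^\dagger|0\RA=\frac{1}{n!}|n\RA$, yields the first identity. For the second, apply $\sum_{m\ge1}\mu(m)A_m^\dagger$ to $\sum_{n\ge1}\frac{1}{n!}|n\RA$; since $A_m^\dagger|n\RA=\frac{n!}{(mn)!}|mn\RA$, collecting by $k=mn$ makes the coefficient of $|k\RA$ equal to $\frac{1}{k!}\sum_{m\mid k}\mu(m)=\frac{1}{k!}\,\delta_{k,1}$, so only $|1\RA$ survives. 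Equivalently, the second identity follows from the first the moment one notices that $\prod_{p}(1-A_p^\dagger)$ and $\prod_{p}(1-A_p^\dagger)^{-1}$ are mutually inverse units of the commutative ring $\fA^+$. The two bra-vector identities are proved by the identical computation on the annihilation side: $\prod_{p}(1-A_p)^{-1}=\sum_{n\ge1}A_n$, $\prod_{p}(1-A_p)=\sum_{n\ge1}\mu(n)A_n$, the defining relation $\LA0|a_m A_n=\LA0|a^{mn}$ gives $\LA0|a\,A_n=\LA n|$ and $\LA n|A_m=\LA nm|$, and $\sum_{m\mid k}\mu(m)=\delta_{k,1}$ again collapses the sum to $\LA1|$. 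I do not expect a genuine obstacle here; the only place demanding care is the rearrangement of the infinite product over the primes into the single sum over $\ZZ_{>0}$, which is why I would isolate that step as a standalone identity in $\fA^{\pm}$ before touching the vectors.
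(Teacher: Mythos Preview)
Your proposal is correct and follows essentially the same route as the paper: expand each $(1-A_p^\dagger)^{-1}$ as a geometric series via the preceding lemma, multiply over primes using commutativity of $\fA^{\pm}$ and unique factorization, then apply to $a^\dagger|0\RA$ (resp.\ $\LA0|a$). For the second identity the paper is even terser than your ``equivalently'' remark---it simply checks the telescoping $(1-A_p^\dagger)\sum_{k\ge0}A_{p^k}^\dagger=1$---so your explicit M\"obius computation $\sum_{m\mid k}\mu(m)=\delta_{k,1}$ is an extra (and welcome) flourish not present in the paper.
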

\begin{proof}
Noting their commutativity, 
$$
\left(\frac{1} {1-A^\dagger_p} \right) 
= 1+{A^\dagger_p} +{A^\dagger_p}^2 +{A^\dagger_p}^3 \cdots
= 1+{A^\dagger_p} +A^\dagger_{p^2} +A^\dagger_{p^3} \cdots.
$$
Since every integer $n$ is uniquely given by
$n = \prod_{i=1}^{\ell_n} p_i^{r_i}$ for
certain prime numbers
$p_i$ and positive numbers ${r_i}$ $(i=1, \cdots, \ell_n)$,
we have
$$
\prod_{p\in\wp} \left(
 1 + {A^\dagger_p} +{A^\dagger_p}^2 +{A^\dagger_p}^3 \cdots
\right) a^\dagger |0\RA =\sum_{n=1} |n\RA.
$$
On the other hand, we have
$$
\left(1-A^\dagger_p \right) 
\left(\frac{1} {1-A^\dagger_p} \right) 
= 1+{A^\dagger_p} +{A^\dagger_p}^2 +{A^\dagger_p}^3 \cdots
 -\left({A^\dagger_p} +{A^\dagger_p}^2 +{A^\dagger_p}^3 \cdots\right)
=1.
$$
\end{proof}

Hence we have an quantum version of the 
Euler product expression (\ref{eq:Ep}): 
\begin{proposition} {\rm{(quantum Euler product expression)}}
\label{prop:QEP}
$$
	\zeta(s) = 
 \LA0| a 
\left( \prod_{q \in \wp}\frac{1}{1-A_q}\right)
(a^\dagger a)^{-s}
\left( \prod_{p \in \wp}\frac{1}
{1-A^\dagger_p} \right) a^\dagger |0\RA .
$$
\end{proposition}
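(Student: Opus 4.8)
The plan is to reduce the asserted identity to the integral representation $\zeta(s)=\LA0|\frac{a}{1-a}(a^\dagger a)^{-s}(\ee^{a^\dagger}-1)|0\RA$ proved in Section 2, by evaluating separately the two ``dressed'' vectors that flank the middle factor: $\prod_{q\in\wp}(1-A_q)^{-1}$ acting to its left on $\LA0|a$, and $\prod_{p\in\wp}(1-A_p^\dagger)^{-1}$ acting to its right on $a^\dagger|0\RA$.

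First I would expand each factor as a geometric series, $(1-A_p^\dagger)^{-1}=1+A_p^\dagger+(A_p^\dagger)^2+\cdots=1+A_p^\dagger+A_{p^2}^\dagger+\cdots$, using $(A_p^\dagger)^k=A_{p^k}^\dagger$ from the multiplicativity lemma above (and $A_1^\dagger=\mathrm{id}$ for the leading term). A generic monomial in the commuting product $\prod_{p\in\wp}(1+A_p^\dagger+A_{p^2}^\dagger+\cdots)$ has the form $\prod_i A_{p_i^{r_i}}^\dagger$, which by $A_m^\dagger A_n^\dagger=A_{mn}^\dagger$ collapses to $A_n^\dagger$ with $n=\prod_i p_i^{r_i}$; unique factorization of integers shows every $n\ge1$ arises from exactly one such monomial. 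Since $A_n^\dagger a^\dagger|0\RA=\frac{1}{n!}(a^\dagger)^n|0\RA$, applying the product to $a^\dagger|0\RA$ gives $\sum_{n\ge1}\frac{1}{n!}(a^\dagger)^n|0\RA=(\ee^{a^\dagger}-1)|0\RA$, which is the content of Lemma \ref{lm:3-2}. The dual computation, expanding $\prod_{q\in\wp}(1-A_q)^{-1}$ and using $\LA0|a^m A_n=\LA0|a^{mn}$, gives $\LA0|a\prod_{q\in\wp}(1-A_q)^{-1}=\sum_{m\ge1}\LA0|a^m=\LA0|\frac{a}{1-a}$, again by Lemma \ref{lm:3-2}.

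Then I would substitute both evaluations into the right-hand side of the Proposition. What makes this legitimate is purely structural: $\prod_{q\in\wp}(1-A_q)^{-1}$ acts only on the bra standing to its left and $\prod_{p\in\wp}(1-A_p^\dagger)^{-1}$ only on the ket standing to its right, so nothing need be commuted through the central factor $(a^\dagger a)^{-s}$, which simply stays sandwiched between the two resulting vectors. Hence the right-hand side equals $\LA0|\frac{a}{1-a}(a^\dagger a)^{-s}(\ee^{a^\dagger}-1)|0\RA$, which is $\zeta(s)$ by the Proposition of Section 2. The ill-defined $0^{-s}$ never appears, since $\frac{a}{1-a}$ and $\ee^{a^\dagger}-1$ both have vanishing constant term, so only eigenvalues $n\ge1$ of $\hn=a^\dagger a$ enter.

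The algebra being essentially bookkeeping, the point that really needs care is analytic, not combinatorial: one must verify that $\prod_{p\in\wp}(1-A_p^\dagger)^{-1}$ is well-defined on $a^\dagger|0\RA$ — which holds because each coefficient of $(a^\dagger)^n|0\RA$ in the image collects contributions from only finitely many monomials — and that $(a^\dagger a)^{-s}$, defined through the $\beta$-integral of $\ee^{-\beta a^\dagger a}$ as in the Appendix, may be applied term by term to the resulting series, i.e. that the $\beta$-integration and the sum over $n$ can be interchanged. This forces one to work in the half-plane $\Re s>1$, where the Dirichlet series $\sum_n n^{-s}$ and the attendant integrals converge absolutely, and to read the identity elsewhere by analytic continuation. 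Everything else is immediate from Lemma \ref{lm:3-2} and the Propositions of Section 2.
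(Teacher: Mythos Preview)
Your argument is correct and follows essentially the same route as the paper: identify the dressed ket and bra with $(\ee^{a^\dagger}-1)|0\RA$ and $\LA0|\frac{a}{1-a}$ via Lemma~\ref{lm:3-2}, then invoke the $\zeta$-representation of Section~2. Your additional remarks on well-definedness and the need to work in $\Re s>1$ go beyond what the paper spells out, but the core strategy is identical.
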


\begin{proof}
From the definition and Lemma \ref{lm:3-2},
 we have the relations
$$
 (\ee^{a^\dagger} -1)|0\RA 
 = \left(\prod_{p \in \wp} (1+A^\dagger_p + A^\dagger_{p^2} + \cdots )
\right) a^\dagger |0\RA,
$$
and 
$$
\LA0|\frac{a}{1-a} =
\LA0| a\left(\prod_{p \in \wp} (1+A_p + A_{p^2} + \cdots )
\right).
$$
Due to the above expression, $\zeta(s)$ is equal to
$$
\LA0|
a\left(\prod_{p \in \wp} (1+A_p + A_{p^2} + \cdots )
\right) (a^\dagger a)^{-s}
\left(\prod_{p \in \wp} (1+A^\dagger_p + A^\dagger_{p^2} + \cdots )
\right) a^\dagger |0\RA.
$$
The independence of each $p$-on gives the relation.
\end{proof}

Noting
$$
(1-A^\dagger_p)
\left( \prod_{q \in \wp}\frac{1}
{1-A^\dagger_q} \right) a^\dagger 
|0\RA 
= 
\left( \prod_{q \in \wp, q\neq p}\frac{1}
{1-A^\dagger_q} \right) a^\dagger 
|0\RA, 
$$
the $\zeta$ function might be decomposed to 
the $\zeta_p$ function.
Further due to interesting relation,
$$
\frac{1}{m^\ell !}\LA 0 |a_{m^\ell} (a a^\dagger)^{-s} 
 a_{m^\ell}^\dagger |0\RA
=
\frac{1}{m !}\LA 0 |a_{m} (a a^\dagger)^{-s \ell}  a_{m}^\dagger |0\RA
$$
we have the Proposition;
\begin{proposition} {\rm{(quantum Euler product expression II)}}
\label{prop:QEPII}
$$
	\zeta_p(s) = 
 \LA0| a 
\left(\frac{1}{1-A_p}\right)
(a^\dagger a)^{-s}
\left( \frac{1}
{1-A^\dagger_p} \right) a^\dagger |0\RA 
$$
\end{proposition}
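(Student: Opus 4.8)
The plan is to evaluate the right-hand side directly, expanding the two geometric series in $A_p^\dagger$ and $A_p$, and to recognise the result as the single‑prime factor $\zeta_p(s)=\dfrac{1}{1-p^{-s}}$ of the Euler product, equivalently as the expression $\dfrac{1}{p!}\LA0|a_p\dfrac{1}{1-(a^\dagger a)^{-s}}a_p^\dagger|0\RA$ of Proposition \ref{prop:EP}.

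First I would record the action of $\dfrac{1}{1-A_p^\dagger}$ on $a^\dagger|0\RA=|1\RA$. By the multiplicity Lemma one has $(A_p^\dagger)^\ell=A_{p^\ell}^\dagger$, so, exactly as in the proof of Lemma \ref{lm:3-2},
$$
\frac{1}{1-A_p^\dagger}\,a^\dagger|0\RA
=\sum_{\ell=0}^{\infty}A_{p^\ell}^\dagger\,a^\dagger|0\RA
=\sum_{\ell=0}^{\infty}\frac{1}{(p^\ell)!}\,(a^\dagger)^{p^\ell}|0\RA ,
$$
using the defining relation $A_n^\dagger(a^\dagger)^m|0\RA=\dfrac{m!}{(mn)!}(a^\dagger)^{mn}|0\RA$ with $m=1$, $n=p^\ell$; symmetrically $\LA0|a\,\dfrac{1}{1-A_p}=\sum_{k=0}^{\infty}\LA0|a^{p^k}$. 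Inserting the diagonal factor $(a^\dagger a)^{-s}$, which multiplies $(a^\dagger)^{p^\ell}|0\RA$ by $(p^\ell)^{-s}$, and using $\LA0|a^{p^k}(a^\dagger)^{p^\ell}|0\RA=(p^\ell)!\,\delta_{p^k,p^\ell}$ (which follows from $a_\ell a_\ell^\dagger=(\hn+1)\cdots(\hn+\ell)$ and $\LA m|n\RA=\delta_{n,m}$), the double sum collapses onto its diagonal $k=\ell$ and becomes $\sum_{\ell\ge0}(p^\ell)^{-s}=\dfrac{1}{1-p^{-s}}=\zeta_p(s)$.

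Equivalently — and this is the route indicated by the reindexing identity displayed just above the statement — each diagonal term equals $\dfrac{1}{(p^\ell)!}\LA0|a_{p^\ell}(a^\dagger a)^{-s}a_{p^\ell}^\dagger|0\RA$, which by that identity (with $m=p$) equals $\dfrac{1}{p!}\LA0|a_p(a^\dagger a)^{-s\ell}a_p^\dagger|0\RA$; resumming $\ell\ge0$ reproduces $\dfrac{1}{p!}\LA0|a_p\dfrac{1}{1-(a^\dagger a)^{-s}}a_p^\dagger|0\RA=\zeta_p(s)$ as in Proposition \ref{prop:EP}. I do not anticipate a genuine obstacle: the argument is bookkeeping, and the only delicate points are (i) keeping the normalisation $1/(p^\ell)!$ attached correctly when passing between $(a^\dagger)^{p^\ell}|0\RA$ and $\sqrt{(p^\ell)!}\,|p^\ell\RA$, and (ii) the vanishing of the cross terms, i.e.\ that the $p^k$‑th and $p^\ell$‑th number states are orthogonal unless $k=\ell$ — this is precisely where primality (as opposed to a general composite) enters. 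As in Proposition \ref{prop:EP}, the geometric series in $A_p^\dagger$, $A_p$ and $(a^\dagger a)^{-s}$ are to be read as formal identities on the states involved, so no further analytic subtlety arises.
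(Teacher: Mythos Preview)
Your proof is correct and is essentially the paper's argument written out in full: the paper's one-line proof just asserts the equality with $\dfrac{1}{p!}\LA0|a_p\dfrac{1}{1-(a^\dagger a)^{-s}}a_p^\dagger|0\RA$ via the reindexing identity, and your expansion of the two geometric series, collapse to the diagonal by orthogonality, and resummation is exactly what underlies that line. One small aside: the orthogonality step $p^k=p^\ell\Leftrightarrow k=\ell$ needs only $p\ge 2$, not primality, so primality plays no special role in this particular proposition.
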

\begin{proof}
$$
 \LA0| a 
\left(\frac{1}{1-A_p}\right)
(a^\dagger a)^{-s}
\left( \frac{1}
{1-A^\dagger_p} \right) a^\dagger |0\RA 
=
 \LA0| a_p
 \frac{1}{1- (a^\dagger a)^{-s}}
a_p^\dagger |0\RA 
= \zeta_p(s).
$$
\end{proof}
The above relation means 
\begin{equation}
\begin{split}
\zeta_p(s) &= 
 \LA0| a 
\left(1+A_p +A_{p^2} +A_{p^3} + \cdots \right)
(a^\dagger a)^{-s}
\left( 1+A^\dagger_p +A^\dagger_{p^2}
+A^\dagger_{p^3} \cdots \right) a^\dagger |0\RA \\
&= \LA0|\left( a+ a_p +a_{p^2} +a_{p^3} + \cdots \right)
(a^\dagger a)^{-s}
\left( a^\dagger_p +a^\dagger_{p^2}
+a^\dagger_{p^3} \cdots \right)  |0\RA \\
\label{eq:QEPII}
\end{split}
\end{equation}
\bigskip

\section{Discussion}

First we comments on an identification the quartet 
$$
(\CC[[a, a^\dagger]],
\fa^-, \fa^+,
\LA0|\CC[[a, a^\dagger]]|0\RA), 
$$
as
$$
(\CC[[\frac{d}{dz}, z]], 
\CC[[\frac{d}{dz}]], \CC[[z]], 
\frac{1}{2\pi\ii}\oint \frac{dz}{z} \CC[[\frac{d}{dz}, z]] \cdot 1)
$$
for $z \in \CC P^1$. This identification could be regarded as
a transformation between harmonic oscillator and operators
on the Fourier series.
We should note that $[a_2, a_2^\dagger]$ is regarded as
$\displaystyle{\left[\frac{d^2}{d z^2}, z^2\right]
=4 x\frac{d}{dx} + 2}$, which
may be related to the quadratic differentials on Riemann surfaces
\cite[Chapter VII.2]{FK}.
Further instead of $\CC P^1$,
for example in \cite{MP} for a algebraic curve, {\it{e.g.}},
$y^r = x^s + \lambda_{s-1} x^{s-1} + \cdots + \lambda_0$, at its
infinite point, the local parameter $z_\infty$ behaves like
$$
z_\infty^r = 1/x + O(z_\infty^{r+1}), \quad 
z_\infty^s = 1/y + O(z_\infty^{r+1}).
$$
In other words, $1/x$ and $1/y$ behave like 
$a^\dagger_r$ and $a^\dagger_s$ respectively.
When we consider more general algebraic curves, there naturally
appear relations among 
$(a_{\ell_1}, a_{\ell_2}, \cdots, a_{\ell_k})$.
They are related to nonlinear integrable system and several
physical phenomenon \cite{BBEIM}.
The dynamics of $\{z^\ell\}$  in the orthogonal polynomial
is connected with the integrable system and the random matrix problem
\cite{S};
the random matrix is also connected with $\zeta$ function \cite{Me}.
Thus this interpretation is not trivial and is very natural from
the viewpoint.

Further we note that
for the system $(x, d/dx)$, $A^\dagger$ could be regarded as
$$
	\frac{d^\ell}{dx^\ell} A_m = 
	\frac{d^{\ell m}}{dx^{\ell m}} , \quad 
	A^\dagger_m x^\ell = \frac{\ell!}{(m\ell)!}x^{m\ell}.
$$
Thus the $p$-on picture is natural from the viewpoint of the 
the identification. 

\bigskip

Secondly we give some comments on 
Proposition \ref{prop:QEP} and Proposition \ref{prop:QEPII} 
of quantum Euler product expression.
The quantum Euler product expression
in Proposition \ref{prop:QEPII}
is reduced to the ordinary Euler product expression (\ref{eq:Ep}).
In other words, in the harmonic oscillator problem, the
natural commutative ring structure exists and provides the
relations to the Euler product expression 
 (\ref{eq:Ep}) of the Riemann $\zeta$ function. 
I have a quantum mechanical interpretation of the Euler product expression
and its quantum meaning as a relation to $p$-on.

I should emphasize that even behind the Planck black body problem and
the Casimir effect, these expressions exist.
(\ref{eq:QEPII}) shows that there exist excitations of $p$-on,
$p^2$-on, $p^3$-on and so on for each prime number.
The multi-photon absorption in quantum optics
is a sure sign of these excitations.
It implies one of answers why the quantum mechanics provides 
a connection with number theory and $p$-adic structure
\cite{M, M1, M2, MO, V1, V2, VVZ}.

\bigskip

Further from the definition, we may have the relation
$$
	A_m\cdot (a^\dagger)^n |0\RA =
      \left\{
\begin{matrix}
	 \frac{n!}{(n/m)!}(a^\dagger)^{n/m} |0\RA  & \mbox{ if }  m | n,\\
	 0                        & \mbox{ otherwise }.\\
\end{matrix}
\right.
$$
In other words, for a prime number $p$, we have
$$
	A_p\cdot (a^\dagger)^{n} |0\RA =
      \left\{
\begin{matrix}
	 \frac{(p^\ell m)!}{(p^{\ell-1} m)!}(a^\dagger)^{m p^{\ell-1}} 
       |0\RA   & \mbox{ if }  
     n = p^\ell m, \ (\ell \ge 1, p \not| m), \\
	 0                        & \mbox{ otherwise }.\\
\end{matrix} \right.
$$
This reminds me of the absolute derivation \cite{KOW},
$$
   \frac{\partial}{\partial p} : \ZZ \to \ZZ,
$$
$$
   \frac{\partial}{\partial p} n
= \left\{
\begin{matrix}
	 \ell p^{\ell-1} m  & \mbox{ if }  
     n = p^\ell m, \ (\ell \ge 1, p \not| m), \\
	 0                        & \mbox{ otherwise },\\
\end{matrix} \right.
$$
for a prime number $p$,  which is introduced for
the study of the Riemann $\zeta$ function. 

I believe that 
this commutative ring structure
in the harmonic oscillator is quite interesting
and gives an answer of the question why  number theory is
connected with quantum mechanics and now we interpret the
Euler product expression in the harmonic oscillator.
It should be emphasized that even behind the Planck black body problem and
the Casimir effect, the commutative ring structure and $p^\ell$-on
exist.

\bigskip

{\bf{Acknowledgement}}

I am grateful to John McKay for telling me the reference
\cite{CM} and to A. Vourdas and Hideo Mitsuhashi for helpful comments.

\bigskip
\section[*]{Appendix L-function}

In this Appendix, I will show the interpretation $L$-functions
 and the Riemann $\zeta$ function in number theory
from a statistical mechanical viewpoint.
In statistical mechanics, we consider
partition functions which are generators of
expectations in canonical ensembles.
The partition function for a statistical
mechanical system $A$ is defined by
\begin{gather}
	Z[\beta]:= \sum_{\mbox{all of states $s$ in $A$}} \ee^{-\beta E(s)},
\end{gather}
where $1/\beta$ is a temperature of the
system $A$ and $E(s)$ is an energy of a state $s \in A$.
By using spectral decomposition, we can also express it as
\begin{gather}
	Z[\beta]=
	\sum_{E} \sum_{s\in S_E}
	 \ee^{-\beta E},
\end{gather}
where $S_E$ is a subset of $A$ which has energy $E$.
Further we sometimes rewrite this
\begin{gather}
Z[\beta]= \sum_{E} c_E \ee^{-\beta E},
\label{eq:1}
\end{gather}
where $c_E$ is number of $S_E$, {\it i.e.}, $c_E:=\# S_E$.
(\ref{eq:1}) is called the energy-representation.

Here we note that $S_E$ is equivalent with respect to
the energy $E$. The equivalence means that
there might exist a group $G_E$ which simply transitively acts on $S_E$.
Then the $c_E$ must be an invariance of $G_E$ or
 a group ring $R=\ZZ[G_E]$.
For example, $c_E$ and $S_E$ might be
 related to an identity representation,
$$
	\hat c_E :=\sum_{x\in G_E} x.
$$
In fact for a map $\varphi : \ZZ[G_E] \ni \sum a_i x_i \to
\sum a_i \in \ZZ$, $\varphi(\hat c_E ) = c_E$.

In statistics, we sometimes deal with sequence of
the $n$-th order expectation value
($n$-th moment) instead of the generator itself.
Thus it is natural to introduce an $n$-th moment,
\begin{gather}
\split
	K[s] &:= \int_0^\infty \beta^s Z[\beta] \frac{d \beta}{\beta}\\
             &=\Gamma(s) \sum_E \frac{c_E}{E^{-s}},
\endsplit
\label{eq:4}
\end{gather}
where $s$ is a natural number.
(Here we note 
that existence of the integral (\ref{eq:4}) is asserted by,
for example, (3.9) Theorem in \cite{Du}.)
It is remarked that $s$ is sometimes extended to a complex number
by analytical continuity.
This  $K[s]$ is called the generalized $\zeta$-function which appears
in \cite{EORBZ,M0}.

Let $E$ be parameterized by an integer or $E=n \in \ZZ$.
When $c_n=1$,  we have the Riemann $\zeta$ function \cite{Pa},
$$
	K[s] = \sum_{n=1} \frac{1}{n^s} = \zeta(s),
$$
which is the main theme of this article.

Let $c_n$ satisfy
$$
	c_n = \left\{ \begin{matrix} 
2 & \mbox{for } n \equiv 1,7 \mbox{ mod } 8,\\
0 & \mbox{for } n \equiv 3,5 \mbox{ mod } 8,\\
1 & \mbox{for } \mbox{ otherwise }.\\
\end{matrix}
\right.
$$ 
$$
	K[s] = L(s,\chi) + \zeta(s),
$$
where $L(s,\chi)$ is the Dirichlet characteristics which is given by
\cite{IR},
$$
	L(s,\chi) = \sum_{n=1} \frac{\chi(n)}{n^s},
$$
and
$$
	\chi(n) := \left\{ \begin{matrix} 
1 & \mbox{for } n \equiv 1,7 \mbox{ mod } 8,\\
-1 & \mbox{for } n \equiv 3,5 \mbox{ mod } 8,\\
0 & \mbox{for } \mbox{ otherwise }.\\
\end{matrix}
\right.
$$
Further 
(\ref{eq:1}) is also related to the Gauss sum \cite{IR, BK, MO}
if 
$\displaystyle{\begin{pmatrix} n \\ p \end{pmatrix}}$ and 
$\displaystyle{\beta= \frac{\sqrt{-1}}{p}}$.

\bigskip

\bigskip

\bigskip

{Shigeki Matsutani}

{e-mail:RXB01142\@nifty.com}

{8-21-1 Higashi-Linkan}

{Sagamihara 228-0811 Japan}


\begin{thebibliography}{BBEIM}

\bibitem[BBEIM]{BBEIM}
\by{E. D. Belokolos, A. I. Bobenko, V. Z. Enolskii, A.R. Its,
and V. B. Matveev}
\book{Algebro-geometrical approach to nonlinear integrable
equations}
\publ{Springer}
\publaddr{Berlin}
 \vol{1} \yr{1994}.


\bibitem[BK]{BK}{
      M. V. Berry and S. Klein,
  {\rm {\lq\lq}Integer, fractional and fractal Talbot effects,{\rq\rq}
  {\it {J. Mod. Opt.}}
  {\bf  43}, 2139-2164 (1996)}.}

\bibitem[BC]{BC}
\by{J.-B. Bost and A. Connes}
\paper{Hecke algebra, type III factors and phase transition with
spontaneous symmetry breaking in number theory}
\jour{Selecta Math.} \vol{1} \yr{1995} \pages{411-457}.

\bibitem[B]{B}
\by{C. Brif}
\paper{Two-photon algebra eigenstates: a unified approach to squeezing}
\jour{Ann. Phys.} \vol{251} \yr{1996} \pages{180-207}.

\bibitem[C]{C}
\by{H. B. G. Casimir}
\paper{On the attraction between two perfectly conducting plates}
\jour{Proc. Kon. Ned. Acad. Wetenschap} \vol{51} \yr{1948} \pages{793-795}.

\bibitem[CM]{CM}
\by{A. Connes and M Marcolli}
\book{Noncommutative geometry, quantum fields and motives}
\publ{AMS Hindustan}
\publaddr{AMS Hindustan}
 \vol{1} \yr{1995} \pages{411-457}.

\bibitem[Di]{Di}
\by{P. A. M. Dirac}
\book{The principles of quantum mechanics, fourth ed.}
\publ{Clarendon}
\publaddr{Oxford}
\byr{1957}.

\bibitem[Du]{Du} \by{R. Durrett}
\book{Probability: theories and examples}
\publ{Cole Publ.}
\publaddr{California}
\byr{1991}.

\bibitem[EORBZ]{EORBZ} \by{E. Elizalde, S. D. Odintsov, A. Romeo,
A. A. Bytsenko and S. Zerbini }
\book{Zeta Regularization Techniques with Applications}
\publ{World Sientific}
\publaddr{Singapore}
\byr{1994}.

\bibitem[FK]{FK} \by{H. M. Farkas and I. Kra}
\book{Riemann Surfaces second ed.}
\publ{Springer}
\publaddr{Berlin}
\byr{1979}.

\bibitem[HG]{HG}
\by{J. P. Hernandez and A. Gold}
\paper{Two-photon absorption in anthracene}
\jour{Phys. Rev. Lett.} \vol{156} \yr{1967} \pages{26-35}.

\bibitem[IR]{IR}
\by{K.~Ireland and M. Rosen}
\book{A Classical Introduction to Modern Number Theory, 2nd ed.}
\publ{Springer} \publaddr{Berlin} \byr{1990}.

\bibitem[IZ]{IZ}
\by{C. Itzkson and J-B. Zuber}
\book{Quantum field theory}
\publ{McGraw-Hill}
\publaddr{New York}
\byr{1978}.


\bibitem[KOW]{KOW} \by{N. Kurokawa, H. Ochiai, and M. Wakayama}
\paper{Absolute derivations and zeta functions}
\jour{Doc. Math.} \vol{2003} Extra Vol., \yr{2003}
\pages{565-584}.

\bibitem[LV]{LV}
\by{G. Lion and M. Vergne}
\book{The Weil representation, Maslov index and Theta series}
Birkh\"auser 1980

\bibitem[L]{L}
\by{R. Loudon}
\book{The quantum theory of light}
\publ{Clarendon}
\publaddr{London}
\byr{1973}.


\bibitem[M]{M} \by{Y. Manin}
\book{Mathematics as metaphor: selected essays of Yuri. I. Manin 
(Collected Works)}
\publ{Amer. Math. Soc.}
\publaddr{New York}
\byr{2007}.

\bibitem[Me]{Me} \by{M. L. Mehta}
\book{Random Matrices}
\publ{Academic Press}
\publaddr{London}
\byr{1991}.



\bibitem[M0]{M0} \by{S. Matsutani}
\paper{Immersion anomaly of Dirac operator on surface in $\RR^3$}
\jour{Rev. Math. Phys. } \vol{11} \yr{1999} \pages{171-186}.


\bibitem[M1]{M1} \by{S. Matsutani}
\paper{$p$-adic difference-difference Lotka-Volterra equation
         and ultra-discrete limit}
 \jour{Int.~J. Math.~and Math.~Sci.}, \vol{27} \yr{2001} 251-260.

\bibitem[M2]{M2} \by{S. Matsutani}
\paper{Gauss Optics and Gauss Sum on an Optical Phenomena}
\jour{Found. Phys. } \vol{38} \yr{2008} \pages{758-777}.

\bibitem[MO]{MO} \by{S. Matsutani, and Y. \^Onishi}
\paper{Wave-particle complementarity and Gauss reciprocity
 in Talbot effect},
\jour{Found. Phys.~Lett.} \vol{16} \yr{2003} 325-341.

\bibitem[MP]{MP} \by{S. Matsutani, and E. Previato}
\paper{Jacobi inversion on strata of the Jacobian of the $C_{rs}$ curve
$y^r = f(x)$}
\jour{J. Math. Soc. Japan} \vol{60} \yr{2008} 1009-1044.

\bibitem[Pa]{Pa}
\by{S. J. Patterson}
\book{Introduction to the theory of the Riemann zeta-function}
\publ{Cambridge}
\publaddr{Cambridge}
\byr{1988}

\bibitem[Pl]{Pl} \by{M. Planck}
\paper{Ueber das Gesetz der Energieverteilung im Normalspectrum}
\jour{Annalen der Physik} \vol{309} \yr{1901} 553--563.



\bibitem[RS]{RS}
\by{H. Raszillier and W. Schempp}
\paper{Fourier optics from the perspective of the Heisenberg group}
in {\it{ Lie Methods in Optics, LNP 250}}
 ed. by J. S. S\'anchez and K. B. Wolf,
(Springer, Berlin, 1985).


\bibitem[S]{S} \by{K. Sogo}
\paper{Time-dependent orthogonal polynomials and theory of soliton
--Application to matrix model, vertex model and level statistics}
\jour{J. Phys. Soc. Japan} \vol{62} \yr{1993} 1887--1894.

\bibitem[T]{T}
\by{A. Terras}
\book{Fourier analysis on finite groups and applications}
\publ{Cambridge}
\publaddr{Cambridge}
\byr{1999}

\bibitem[V1]{V1} \by{A. Vourdas}
\paper{Quantum systems with finite Hilbert space:
Galois fields in quantum mechanics}
\jour{J. Phys A} \vol{40} \yr{2007} \pages{R285-R331}.

\bibitem[V2]{V2} \by{A. Vourdas}
\paper{Quantum mechanics on $p$-adic numbers}
\jour{J. Phys A} \vol{41} \yr{2008} \pages{455303 (20pp)}.


\bibitem[VVZ]{VVZ}
\by{V. S. Vladimirov, I. V. Volovich, E. I. Zelenov}
\book{p-adic analysis and mathematical physics}
\publ{World Scientific}
\publaddr{Singapore}
\byr{1994}.

\bibitem[WW]{WW}{
 J. T.  Winthrop and C. R. Worthington,
  {\rm {\lq\lq}Theory of Fresnel images. I. plane periodic objects
in monochromatic light,{\rq\rq}
{\it {J. opt. Soc. Am.}}
{\bf 55},
  373-381 (1965)}.}
\end{thebibliography}
\end{document}